\begin{document}

\title{On the Properties of the Priority Deriving Procedure in the Pairwise
Comparisons Method}

\author{Konrad Ku\l{}akowski}

\institute{AGH University of Science and Technology, \\
al. Mickiewicza 30, Kraków, Poland, \\
\href{mailto:konrad.kulakowski@agh.edu.pl}{konrad.kulakowski@agh.edu.pl}}
\maketitle
\begin{abstract}
The pairwise comparisons method is a convenient tool used when the
relative order of preferences among different concepts (alternatives)
needs to be determined. There are several popular implementations
of this method, including the Eigenvector Method, the Least Squares
Method, the Chi Squares Method and others. Each of the above methods
comes with one or more inconsistency indices that help to decide whether
the consistency of input guarantees obtaining a reliable output, thus
taking the optimal decision. 

This article explores the relationship between inconsistency of input
and discrepancy of output. A global ranking discrepancy describes
to what extent the obtained results correspond to the single expert's
assessments. On the basis of the inconsistency and discrepancy indices,
two properties of the weight deriving procedure are formulated. These
properties are proven for Eigenvector Method and Koczkodaj's Inconsistency
Index. Several estimates using Koczkodaj's Inconsistency Index for
a principal eigenvalue, Saaty's inconsistency index and the Condition
of Order Preservation are also provided. 
\end{abstract}

\section{Introduction }

The first documented uses of comparisons in pairs date back to the
thirteenth century \cite{Colomer2011rlfa}. Later, the method was
developed by \emph{Fechner} \cite{Fechner1966eop}, \emph{Thurstone}
\cite{Thurstone27aloc} and Saaty \cite{Saaty1977asmf}. The latter
proposed the Analytic Hierarchy Process \emph{(AHP)} extension to
the pairwise comparisons (herein abbreviated as\emph{ PC}) theory,
the framework allowing dealing with a large number of criteria. 

At the beginning of the twentieth century the method was used in psychometrics
and psychophysics \cite{Thurstone27aloc}. Now the method is considered
part of decision theory \cite{Saaty2005taha}. Its utility has been
confirmed in numerous examples \cite{Vaidya2006ahpa,Ho2008iahp,Liberatore2008tahp,Subramanian2012aroa}.
Despite its long existence the area still prompts researchers to further
exploration. Examples of such exploration are the \emph{Rough Set}
approach \cite{Greco2011fk}, voting systems \cite{Faliszewski2009lacv},
fuzzy \emph{PC} relation handling \cite{Mikhailov2003dpff,Yuen2013fcnp},
incomplete \emph{PC} relation \cite{Bozoki2010ooco,Fedrizzi2007ipca},
non-numerical rankings \cite{Janicki2012oapc}, nonreciprocal \emph{PC}
relation properties \cite{Fulop2012oscp}, rankings with the reference
set of alternatives \cite{Kulakowski2013ahre,Kulakowski2013hreaCoRR}
and others. Further references can be found in \cite{Smith2004aada,Ishizaka2009ahpa}. 

The aim of the \emph{PC} method is to determine the weights of objects,
so that the more important object is, the higher (or the lower) weight
it has. The method provides the user with a number of specific techniques
for deriving weights from the \emph{PC} matrix $M$ \cite{Bozoki2008osak}.
With every specific technique an appropriate inconsistency index is
associated that describes to what extent $M$ is inconsistent. The
value of an inconsistency index is perceived as a kind of quality
determinant for $M$ - the input data to the deriving weight procedure.
Following the popular adage ``garbage in, garbage out'' one could
say that when the inconsistency is high (consistency is low) the result
must be poor. Indeed, for various inconsistency indices there exist
thresholds of acceptability, above which the obtained results are
considered to be unreliable. 

However, poor and unreliable results may be also due to the deriving
method itself. One way to learn something about the heuristic procedure
is to compare how far the data on input are from the ideal input with
the resulting quality on output. In the Pairwise Comparisons (PC)
Method the input quality is determined by inconsistency indices. In
the present article the global ranking discrepancy \cite{Kulakowski2013nodi}
is proposed as a method of determining the output quality in the pairwise
comparisons method. Based on both the inconsistency index and the
global ranking discrepancy two basic properties of the weights deriving
procedure are formulated (Sec. \ref{sec:Properties-of-the}). Both
postulated properties are proven for the pair: the eigenvalue based
method and Koczkodaj's inconsistency index. The consequence of this
fact are four theorems describing the relationship between Koczkodaj's
Index and Saaty's index \cite{Saaty2013dk}, the principal eigenvalue
of the matrix $M$, the first and the second condition of order preservation
introduced by Bana e Costa and Vansninck \cite{BanaeCosta2008acao}
(Sec. \ref{sec:Properties-of-the-K-Idx}).

\section{Preliminaries }

\subsection{Pairwise comparisons method}

The input data for the \emph{PC} method is a \emph{PC} matrix $M=[m_{ij}]$,
where $m_{ij}\in\mathbb{R}_{+}$ and $i,j\in\{1,\ldots,n\}$, that
expresses a quantitative relation $R$ over the finite set of concepts
$C\overset{\textit{df}}{=}\{c_{i}\in\mathscr{C}\wedge i\in\{1,\ldots,n\}\}$.
The set $\mathscr{C}$ is a non empty universe of concepts and $R(c_{i},c_{j})=m_{ij}$,
$R(c_{j},c_{i})=m_{ji}$. The values $m_{ij}$ and $m_{ji}$ indicate
the relative importance of concepts $c_{i}$ and $c_{j}$, so that
according to the best knowledge of experts who provide the matrix
$M$ the importance of $c_{i}$ equals the importance of $c_{j}$
multiplied by factor $m_{ij}$. 
\begin{definition}
\label{def:A-matrix-recip}A matrix $M$ is said to be reciprocal
if $\forall i,j\in\{1,\ldots,n\}:m_{ij}=\frac{1}{m_{ji}}$ and $M$
is said to be consistent if $\forall i,j,k\in\{1,\ldots,n\}:m_{ij}\cdot m_{jk}\cdot m_{ki}=1$.
\end{definition}
Since the \emph{PC} matrix is usually created by humans (experts),
the information contained therein may be inconsistent. That is, there
may be a triad of values $m_{ij},m_{jk},m_{ki}$ from $M$ for which
$m_{ij}\cdot m_{jk}\cdot m_{ki}\neq1$. In other words, different
ways of estimating the value of a concept may lead to different results.
This fact leads to the concept of an inconsistency index describing
the extent to which the matrix $M$ is inconsistent. There are a number
of inconsistency indexes associated with the pairwise comparisons
deriving methods, including \emph{Eigenvector Method} \cite{Saaty1977asmf},
\emph{Least Squares Method}, \emph{Chi Squares Method} \cite{Bozoki2008osak},
\emph{Koczkodaj's} \emph{distance based inconsistency index }\cite{Koczkodaj1993ando}
and others. The two best-known indexes are defined below.
\begin{definition}
The eigenvalue based consistency index \emph{(Saaty's index)} of $n\times n$
reciprocal matrix $M$ is equal to: 
\begin{equation}
\mathcal{S}(M)=\frac{\lambda_{\textit{max}}-n}{n-1}\label{eq:Consistency_Index_AHP}
\end{equation}

where $\lambda_{\textit{max}}$ is the principal eigenvalue of $M$.
\end{definition}
The value $\lambda_{\textit{max}}\geq n$ and $\lambda_{\textit{max}}=n$
only if $M$ is consistent \cite{Saaty2013dk}. The more accurate
estimate of the value $\lambda_{\textit{max}}$ is shown in (Theorem
\ref{principal_eigen_val_estim}).
\begin{definition}
\label{def:Koczkodaj's-inconsistency-index}Koczkodaj's inconsistency
index $\mathscr{K}$ of $n\times n$ and ($n>2)$ reciprocal matrix
$M$ is equal to:

\begin{equation}
\mathscr{K}(M)=\underset{i,j,k\in\{1,\ldots,n\}}{\max}\left\{ \min\left\{ \left|1-\frac{m_{ij}}{m_{ik}m_{kj}}\right|,\left|1-\frac{m_{ik}m_{kj}}{m_{ij}}\right|\right\} \right\} \label{eq:1-koczkod_inc_idx-2}
\end{equation}

where $i,j,k=1,\ldots,n$ and $i\neq j\wedge j\neq k\wedge i\neq k$. 
\end{definition}
The result of the pairwise comparisons method is a ranking - a mapping
that assigns values to the concepts. Formally, it can be defined as
the following discrete function. 
\begin{definition}
The ranking function for $C$ (the ranking of $C$) is a function
$\mu:C\rightarrow\mathbb{R}_{+}$ that assigns to every concept from
$C\subset\mathscr{C}$ a positive value from $\mathbb{R}_{+}$. 
\end{definition}
In other words, $\mu(c)$ represents the ranking value for $c\in C$.
The $\mu$ function is usually written in the form of a vector of
weights i.e. $\mu\overset{\textit{df}}{=}\left[\mu(c_{1}),\ldots\mu(c_{n})\right]^{T}$.
One popular method assumes that $\mu$ is $\mu_{max}=\left[\mu_{\textit{max}}(c_{1}),\ldots\mu_{\textit{max}}(c_{n})\right]^{T}$
the principal eigenvector of $M$ and rescale them so that the sum
of its elements is $1$, i.e. 

\begin{equation}
\mu_{\textit{ev}}=\left[\frac{\mu_{\textit{max}}(c_{1})}{s_{\textit{ev}}},\ldots,\frac{\mu_{\textit{\textit{max}}}(c_{n})}{s_{\textit{ev}}}\right]^{T}\,\,\,\,\,\mbox{where}\,\,\,\, s_{\textit{ev}}=\underset{i=1}{\overset{n}{\sum}}\mu_{\textit{max}}(c_{i})\label{eq:eigen-value-approach}
\end{equation}

where $\mu_{\textit{ev}}$ - is the ranking function. Due to the \emph{Perron-Frobenius}
theorem \cite{Perron1907ctdm,Saaty1977asmf} $\mu_{\textit{max}}$
exists, because a real square matrix with positive entries has a unique
largest real eigenvalue such that the associated eigenvector has strictly
positive components.

\subsection{Ranking Discrepancy}

Following the \emph{PC} matrix definition their entries represent
the relative importance of concepts, so that one would expect that
for $M=[m_{ij}]$ holds that $m_{ij}=\nicefrac{\mu(c_{i})}{\mu(c_{j})}$.
Since $m_{ij}$ is the results of a subjective expert judgment, it
is subject to error. Therefore, in practice $m_{ij}$ only approximates
the ratio $\nicefrac{\mu(c_{i})}{\mu(c_{j})}$ i.e. $m_{ij}\approx\nicefrac{\mu(c_{i})}{\mu(c_{j})}$.
Let us define the difference between $m_{ij}$ and $\nicefrac{\mu(c_{i})}{\mu(c_{j})}$
formally. 
\begin{definition}
Let the ranking error $\epsilon$ be the value: 
\begin{equation}
\epsilon(i,j,\mu)\overset{\textit{df}}{=}m_{ji}\frac{\mu(c_{i})}{\mu(c_{j})}=\frac{1}{m_{ij}}\frac{\mu(c_{i})}{\mu(c_{j})}\label{eq:K_def}
\end{equation}

where $M$ is the \emph{PC} matrix and $\mu$ is the ranking function
over concepts represented by $M$. Whenever the parameter $\mu$ is
known or irrelevant to the conducted reasoning the expression $\epsilon(i,j,\mu)$
will be shortened to $\epsilon(i,j)$. 
\end{definition}
The value $\epsilon(i,j,\mu)=\epsilon(i,j)$ determines how much $m_{ji}$
- a single expert judgment differs from $\nicefrac{\mu(c_{i})}{\mu(c_{j})}$
- the ranking result. In the ideal case the expert judgment should
perfectly correspond to the ranking results. Thus, for every $i,j\in\{1,\ldots,n\}$
it should hold that $\epsilon(i,j)=1$. Unfortunately, the ranking
is usually not perfect. Therefore, depending upon whether an expert
has underestimated or overestimated the relative value of $c_{i}$
with respect to $c_{j}$, the value $\epsilon(i,j)$ may be above
or below $1$. Of course the same applies to the ranking $\mu$. In
other words it may be the case that actually the judgment given as
$m_{ij}$ is correct, whilst the ranking $\mu$ is constructed defectively.
The reciprocity of $M$ implies that the value $\epsilon(i,j)$ is
also reciprocal, i.e. $\epsilon(i,j)=\frac{1}{\epsilon(j,i)}$. Therefore,
either $\epsilon(i,j)\geq1$ or $\frac{1}{\epsilon(i,j)}\geq1$. This
observation allows one to formulate the following local ranking error
$\mathcal{E}$ definition. 
\begin{definition}
Let the local ranking discrepancy $\mathcal{E}$ be the value:

\begin{equation}
\mathcal{E}(i,j,\mu)\overset{\textit{df}}{=}\max\{\epsilon(i,j,\mu)-1,\frac{1}{\epsilon(i,j,\mu)}-1\}\label{eq:local_error-1}
\end{equation}

Whenever the parameter $\mu$ is known or irrelevant to the conducted
reasoning the expression $\mathcal{E}(i,j,\mu)$ will be shorten to
$\mathcal{E}(i,j)$. 
\end{definition}
Any other value of $\mathcal{E}(i,j)$ than $0$ means that the expert
judgement given as $m_{ij}$ differs from the ratio $\nicefrac{\mu(c_{i})}{\mu(c_{j})}$
an appropriate number of times. For instance $\mathcal{E}(i,j)=0.5$
would mean that the expert judgment $m_{ij}$ is half the time more
or half the time less than\\
 $\nicefrac{\mu(c_{i})}{\mu(c_{j})}$. In other words, in this particular
case the local discrepancy between the ranking result and the expert
judgement reaches $50\%$. 
\begin{definition}
\label{def:discrepancy_definition}Let the global ranking discrepancy
$\mathcal{D}(M,\mu)$ for the pairwise comparisons matrix $M$, and
the ranking $\mu$, be the maximal value of $\mathcal{E}(i,j,\mu)$
for $i,j=1,\ldots,n$, i.e.

\begin{equation}
\mathcal{D}(M,\mu)\overset{\textit{df}}{=}\max_{i,j\in\{1,\ldots,n\}}\mathcal{E}(i,j,\mu)\label{eq:s_loc_def-1}
\end{equation}

\end{definition}
Thus, the ranking discrepancy \cite{Kulakowski2013nodi} represents
the largest local ranking error and reveals to the users the worst
case of discrepancy in the ranking $\mu$ and the matrix $M$.

\section{Sources of discrepancy}

As it was indicated in the previous section, there can be two main
reasons why $\epsilon(i,j)\neq1$. The first is the poor quality of
judgements provided by experts. Expert estimates may be inaccurate,
flawed or disturbed in some other way%
\footnote{Some researchers argue that one cause of inconsistency (thus also
discrepancy) is the judgement scale \cite{Ishizaka2011rotm}. There
are some theoretical reasons \cite{Fulop2012oscp,FulopKS10adpo} for
using a judgement scale smaller than the F\"u{}l\"o{}p constant~i.e.~$\sqrt{\frac{1}{2}(11+5\sqrt{5})}\approx3.330191$.%
}. The method for assessing the quality of expert estimates is to determine
the input data inconsistency level, hence to calculate an inconsistency
index. Most inconsistency indices depend on both the results of paired
comparisons, and the ranking result given as a vector $\mu$ \cite{Bozoki2008osak}.
Thus, their applicability is limited to a specific ranking method.
For instance, the most popular \emph{Saaty's} inconsistency index
$\mathcal{S}$ uses the principal eigenvalue of $M$, thus, indirectly
depends on the eigenvector $\mu$. The exception is \emph{Koczkodaj's}
inconsistency index $\mathscr{K}$, defined only on the basis of $M$.
It does not depend on any particular priorities deriving method. This
property makes it universal and suitable for any weights calculation
scheme%
\footnote{A comparative analysis of both $\mathcal{S}(M)$ and $\mathscr{K}(M)$
can be found in \cite{Bozoki2008osak}.%
}. For further consideration we assume that the inconsistency index
$\textit{IC}$ depends only on the \emph{PC} matrix $M$ having in
mind that some $\textit{IC}$ makes sense only in the context of some
specific priorities deriving methods. Thus, the inconsistency index
could be written in the form:
\begin{equation}
\textit{IC}:\mathcal{M}_{\mathbb{R}_{+}}(n)\rightarrow\mathbb{R}_{+}\cup\{0\}\label{eq:inconsistency_index_abstract}
\end{equation}

where $\mathcal{M}_{\mathbb{R}_{+}}(n)$ is the set of all reciprocal
matrices $n\times n$ over $\mathbb{R}_{+}$. The inconsistency index
$\textit{IC\,}(M)$ equals $0$ when $M\in\mathcal{M}_{\mathbb{R}_{+}}(n)$
is consistent, i.e. following \cite{Saaty1977asmf}, for every triad
$m_{ik},m_{kj}$ and $m_{ij}$ of entries from $M$ holds that $m_{ik}m_{kj}=m_{ij}$.
It is assumed that the more consistent (the less inconsistent) $M$
the smaller $\textit{IC}\,(M)$. It is worth noting that inconsistency
index $\mathscr{K}$ do not use the entire set of real numbers as
a set of their values. Instead values $\mathscr{K}(M)\in[0,1)$$ $.
The work \cite{Koczkodaj2013oaoi} contains a detailed proposal (axiomatization)
of what an inconsistency index should be. 

The second reason for discrepancy is the way in which the ranking
was created. In other words, the ranking discrepancy may be a side
effect of the ranking procedure. If that is so, it is worth considering
what properties should meet the priorities deriving method. To answer
this question let us define the ranking method more formally. 
\begin{definition}
Let the priority deriving procedure for the pairwise comparisons method
be represented by the following mapping $P$: 
\begin{equation}
P:\mathcal{M}_{\mathbb{R}_{+}}(n)\rightarrow L\label{eq:priority_deriving_method_abstract}
\end{equation}

where $L=\left\{ \mu\,|\,\mu:C\rightarrow\mathbb{R}_{+}\right\} $
is the space of ranking functions over $C=\{c_{1},\ldots,c_{n}\}$,
and $\mathcal{M}_{\mathbb{R}_{+}}(n)$ is the set of all reciprocal
matrices $n\times n$ over $\mathbb{R}_{+}$. Let the set of all priority
deriving procedures be denoted by $\mathcal{P}$. The specific ranking
designated by $P(M)$ will be denoted as $\mu_{P(M)}$ or $\mu_{P}$
(if $M$ is known), i.e. $P(M)=\mu_{P(M)}=\mu_{P}$. Of course, although
$P$ is written with the help of the functional notation (which may
suggest that $P$ is a function) in practice it might be implemented
as an arbitrarily complicated procedure. 
\end{definition}

\section{Properties of the Priority Deriving Procedure\label{sec:Properties-of-the}}

The principal assumption of the pairwise comparisons method is that
for $M=[m_{ij}]$ holds $m_{ij}$ reflects the relative importance
of concepts. Hence the situation in which $m_{ij}\neq\nicefrac{\mu(c_{i})}{\mu(c_{j})}$,
can be caused only by the inconsistency of $M$ regardless of how
$\mu$ has been chosen. Conversely, for a consistent $M$ one would
expect that $m_{ij}=\nicefrac{\mu(c_{i})}{\mu(c_{j})}$ for all $i,j\in\{1,\ldots,n\}$
and for any $\mu\in L$ obtained by the procedure $P\in\mathcal{P}$.
In other words it is required that for the regular ranking procedure
$P$ the existence of inconsistency in $M$ is the direct cause of
the existence of discrepancy (Def. \ref{def:discrepancy_definition}).
This common-sense postulate can be written as follows:
\begin{proposition}
\label{proposition1}The priority deriving procedure $P\in\mathcal{P}$
is said to be regular if 
\begin{equation}
\textit{IC}\,(M)=0\Rightarrow\mathcal{D}(M,\mu_{P(M)})=0\label{eq:first_proposition}
\end{equation}

for the given (applicable) inconsistency index $\textit{IC}$ and
the PC matrix $M\in\mathcal{M}_{\mathbb{R}_{+}}(n)$. 
\end{proposition}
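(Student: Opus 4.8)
The plan is to verify the regularity condition for the concrete pair consisting of the Eigenvector Method $P$ of~\eqref{eq:eigen-value-approach} and Koczkodaj's index $\mathscr{K}$; that is, I would assume $\mathscr{K}(M)=0$ and deduce $\mathcal{D}(M,\mu_{\textit{ev}})=0$, thereby exhibiting a nonvacuous instance of the postulate.

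First I would unpack the hypothesis $\mathscr{K}(M)=0$. Since $\mathscr{K}(M)$ is a maximum of nonnegative quantities, it vanishes exactly when each inner term vanishes, i.e. for every admissible triad $(i,j,k)$ the quantity $\min\{|1-m_{ij}/(m_{ik}m_{kj})|,\,|1-m_{ik}m_{kj}/m_{ij}|\}$ equals $0$. Both arguments of that minimum vanish under the single condition $m_{ij}=m_{ik}m_{kj}$, so $\mathscr{K}(M)=0$ forces $m_{ij}=m_{ik}m_{kj}$ for all $i,j,k$, which by Definition~\ref{def:A-matrix-recip} is precisely consistency of $M$.

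Next I would exploit the rank-one structure of a consistent matrix to identify the Perron eigenvector. Fixing a reference column, say $k=1$, the relation $m_{ij}=m_{i1}m_{1j}$ together with reciprocity gives $m_{ij}=m_{i1}/m_{j1}$; setting $w_i:=m_{i1}$ yields $m_{ij}=w_i/w_j$. A direct computation then gives $(Mw)_i=\sum_j (w_i/w_j)w_j = n\,w_i$, so $w$ is an eigenvector with eigenvalue $n$. Because $m_{ii}=1$ forces $\mathrm{trace}(M)=n$, and a rank-one matrix carries a single nonzero eigenvalue equal to its trace, we get $n=\lambda_{\textit{max}}$, and Perron--Frobenius identifies $w$ up to scaling with $\mu_{\textit{max}}$, whence $m_{ij}=\mu_{\textit{max}}(c_i)/\mu_{\textit{max}}(c_j)$.

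Finally I would feed this back into the discrepancy. Since $\mu_{\textit{ev}}$ in~\eqref{eq:eigen-value-approach} is obtained from $\mu_{\textit{max}}$ by dividing every coordinate by the common scalar $s_{\textit{ev}}$, ratios are preserved, so $\mu_{\textit{ev}}(c_i)/\mu_{\textit{ev}}(c_j)=m_{ij}$. Substituting into~\eqref{eq:K_def} gives $\epsilon(i,j,\mu_{\textit{ev}})=m_{ij}^{-1}\cdot m_{ij}=1$ for every $i,j$, whence $\mathcal{E}(i,j,\mu_{\textit{ev}})=\max\{0,0\}=0$ by~\eqref{eq:local_error-1} and $\mathcal{D}(M,\mu_{\textit{ev}})=0$ by Definition~\ref{def:discrepancy_definition}. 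I expect the only genuine obstacle to be the middle step, namely rigorously passing from the triadic consistency condition to the eigenvector identity $m_{ij}=\mu_{\textit{max}}(c_i)/\mu_{\textit{max}}(c_j)$; the unpacking of $\mathscr{K}$ and the final ratio substitution are routine bookkeeping.
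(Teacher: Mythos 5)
Your argument is mathematically sound, but note first that Proposition \ref{proposition1} is a definition rather than a theorem: it introduces the term \emph{regular} for a procedure $P$, and the paper attaches no proof to it. What you have actually proved is (the regularity half of) the theorem in Section \ref{sec:Properties-of-the-Eigenvalue-based-PDP} asserting that $P_{\textit{ev}}$ is regular with respect to $\mathscr{K}$ --- i.e.\ you exhibit one nonvacuous instance of the definition, which is a reasonable reading of the task. Your route to that instance also differs from the paper's. The paper obtains regularity as the degenerate case of the quantitative bound of Lemma \ref{lemma1}, $\alpha\leq\epsilon(i,j,\mu)\leq\frac{1}{\alpha}$ with $\alpha=1-\mathscr{K}(M)$, which it derives by combining the triad estimate $\alpha\, m_{ik}m_{kj}\leq m_{ij}\leq\frac{1}{\alpha}m_{ik}m_{kj}$ with the eigenvector equation; setting $\mathscr{K}(M)=0$ forces $\epsilon\equiv1$ and hence $\mathcal{D}(M,\mu)=0$. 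You instead argue classically: $\mathscr{K}(M)=0$ yields triadic consistency, consistency yields the rank-one form $m_{ij}=w_i/w_j$ with $w_i=m_{i1}$, so $Mw=nw$ and Perron--Frobenius identifies $w$ with $\mu_{\textit{max}}$ up to scale, whence every $\epsilon(i,j)=1$. This is essentially the argument the paper itself uses (via Saaty's Theorem~1) to prove regularity with respect to $\mathcal{S}$, and it is more elementary and self-contained than the detour through Lemma \ref{lemma1}. What it does not buy you is the quantitative estimate $\mathcal{D}(M,\mu)\leq\frac{1}{\alpha}-1$ for inconsistent $M$, which the paper needs anyway to establish the second property (``follows the inconsistency''); the paper's approach earns both properties from one lemma, while yours covers only the $\mathscr{K}(M)=0$ case.
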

The proposition formulated above can be seen as a kind of relationship
between $\textit{IC}$ and $P$. Indeed, when considering an $\textit{IC}$
that depends only from $M$ (e.g. Koczkodaj's inconsistency index
$\mathcal{K}$), since $\mathcal{D}$ is defined and fixed, the first
proposition describes an inherent property of $P$. 

The matrix $M$ is usually created as the result of the hard work
of experts in the field of the relation $R$. Individuals, including
experts, are often inconsistent in their judgements \cite{Roberts1985mtwa,Saaty2013dk}.
However, there is a level of tolerable inconsistency beyond which
judgments would appear to be uninformed, random, or arbitrary \cite{Saaty2013dk}.
Thus, the question arises of how to deal with excessive inconsistency.
The literature \cite{Ho2008iahp,Saaty2013dk} advises revising the
matrix $M$ so that the new version of the matrix is more (sufficiently)
consistent. Another method is to find the closest (in the geometrical
sense) consistent approximation of $M$ \cite{Koczkodaj2010odbi}.
The purpose of the inconsistency reduction is to make the values $m_{ij}$
and $m{}_{ik}m_{kj}$ closer to each other, i.e. (having in mind that
$m_{ik}\approx\nicefrac{\mu(c_{i})}{\mu(c_{k})}$ and $m_{kj}\approx\nicefrac{\mu(c_{k})}{\mu(c_{j})}$)
minimizing the discrepancy. Hence, it is natural to expect that decreasing
inconsistency leads to a discrepancy reduction. This postulate can
be formulated as follows:
\begin{proposition}
\label{proposition2}It is said that the priority deriving procedure
$P\in\mathcal{P}$ follows the inconsistency if 
\begin{equation}
\textit{IC\,}(M)\rightarrow0\Rightarrow\mathcal{D}(M,\mu_{P(M)})\rightarrow0\label{eq:second_proposition_a}
\end{equation}

and there exists a reasonably small $0<\kappa<\textit{IC}(M)$ such
that: 
\begin{equation}
\textit{IC}\,(M)\geq\textit{IC}\,(M')+\kappa\Rightarrow\mathcal{D}(M,\mu_{P(M)})>\mathcal{D}(M',\mu_{P(M)})\label{eq:second_proposition_b}
\end{equation}

for the given (applicable) inconsistency index $\textit{IC}$, where
$M$ is the PC matrix from $\mathcal{M}_{\mathbb{R}_{+}}(n)$.
\end{proposition}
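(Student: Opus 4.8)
The plan is to verify both clauses, equation~\eqref{eq:second_proposition_a} and equation~\eqref{eq:second_proposition_b}, for the canonical pair in which $P$ is the eigenvalue method~\eqref{eq:eigen-value-approach} and $\textit{IC}=\mathscr{K}$ is Koczkodaj's index, as promised in the abstract. For the limit statement~\eqref{eq:second_proposition_a} I would argue by a continuity/compactness argument that upgrades the boundary case already captured by regularity (Proposition~\ref{proposition1}) to a full neighbourhood of the consistent set. The ingredients are: (i) $\mathscr{K}$ is a continuous function of the entries of $M$ with $\mathscr{K}(M)=0$ exactly on the consistent matrices; (ii) by the Perron--Frobenius theorem a positive reciprocal matrix has a simple dominant eigenvalue, so the normalised principal eigenvector $\mu_{\textit{ev}}=P(M)$ depends continuously on $M$; and (iii) since $\epsilon(i,j,\mu_{\textit{ev}})=\tfrac{1}{m_{ij}}\tfrac{\mu_{\textit{ev}}(c_i)}{\mu_{\textit{ev}}(c_j)}$, both $\mathcal{E}(i,j,\mu_{\textit{ev}})$ and hence $\mathcal{D}(M,\mu_{\textit{ev}})$ are continuous in $M$.

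Concretely, restricting to the compact set of reciprocal matrices whose entries lie in a fixed bounded judgement range (as is standard, cf.\ the F\"ul\"op bound in the footnote), I would take any sequence $M_t$ with $\mathscr{K}(M_t)\to 0$ and suppose, for contradiction, that $\mathcal{D}(M_t,\mu_{\textit{ev}})\geq c>0$ along a subsequence. By compactness this subsequence converges to some positive $M_\infty$, for which continuity of $\mathscr{K}$ forces $\mathscr{K}(M_\infty)=0$, i.e.\ $M_\infty$ is consistent; then the consistent-case computation $m_{ij}=\mu_{\textit{ev}}(c_i)/\mu_{\textit{ev}}(c_j)$ gives $\epsilon\equiv 1$ and $\mathcal{D}(M_\infty,\mu_{\textit{ev}})=0$, contradicting $\mathcal{D}\geq c$ by continuity. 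This establishes~\eqref{eq:second_proposition_a}, which is precisely the ``continuous strengthening'' of Proposition~\ref{proposition1}.

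For the strict-monotonicity clause~\eqref{eq:second_proposition_b} I would aim for a genuinely two-sided, quantitative link between $\mathscr{K}(M)$ and $\mathcal{D}(M,\mu_{\textit{ev}})$, rather than the one-sided upper bound that suffices for part (a). The idea is to locate the triad $(i,k,j)$ realising $\mathscr{K}(M)$ and relate its deviation $\min\{|1-m_{ij}/(m_{ik}m_{kj})|,\dots\}$ to the local discrepancies $\mathcal{E}(i,k)$, $\mathcal{E}(k,j)$, $\mathcal{E}(i,j)$ through the eigenvector recurrence $\sum_{l} m_{il}\mu_{\textit{ev}}(c_l)=\lambda_{\textit{max}}\mu_{\textit{ev}}(c_i)$ and the eigenvalue estimate (Theorem~\ref{principal_eigen_val_estim}), exhibiting $\mathcal{D}(M,\mu_{\textit{ev}})$ as a strictly increasing function of $\mathscr{K}(M)$ up to a controllable error. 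The granularity constant $\kappa$ then absorbs exactly this error: choosing $\kappa$ as a fixed fraction of $\mathscr{K}(M)$, larger than the fluctuation permitted by the slack in the estimate, guarantees that a drop of at least $\kappa$ in the index forces a strict drop in the discrepancy.

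The step I expect to be the main obstacle is~\eqref{eq:second_proposition_b}. Unlike part (a) it is not a limiting statement, and both $\mathscr{K}$ (a maximum over triads) and $\mathcal{D}$ (a maximum over pairs) are extremal quantities whose active indices can jump under perturbation of $M$, so a naive monotonicity claim is false without the $\kappa$-slack. The delicate point is to make the dependence of $\mathcal{D}(M,\mu_{\textit{ev}})$ on $\mathscr{K}(M)$ quantitative enough that $\kappa$ may be chosen ``reasonably small'' and yet still dominate these index-switching fluctuations; I would handle this by proving the monotone estimate uniformly on the same compact entry range used in part (a), which keeps all the implied constants, and hence $\kappa$, under control.
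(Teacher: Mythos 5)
First, a framing point: Proposition~\ref{proposition2} is a \emph{definition} (of what it means for $P$ to ``follow the inconsistency''); the paper proves the property only later, for the pair $(\mathscr{K},P_{\textit{ev}})$, and you have correctly read the task that way. For clause~\eqref{eq:second_proposition_a} your compactness/continuity argument is sound but genuinely different from, and weaker than, the paper's route: the paper's Lemma~\ref{lemma1} yields the explicit uniform bound $\mathcal{D}(M,\mu)\leq\frac{1}{1-\mathscr{K}(M)}-1$ for \emph{every} positive reciprocal matrix, from which~\eqref{eq:second_proposition_a} is immediate with an explicit modulus, whereas your soft argument needs the artificial restriction to a compact bounded-entry family and produces no rate. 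That trade-off would be acceptable on its own, but you should be aware that the quantitative bound~\eqref{eq:theo_estim} is also the engine of clause (b), so bypassing it in part (a) leaves you without the tool you need next.

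The genuine gap is in clause~\eqref{eq:second_proposition_b}. You aim for a ``two-sided quantitative link,'' i.e.\ a \emph{lower} bound on $\mathcal{D}(M,\mu_{\textit{ev}})$ in terms of $\mathscr{K}(M)$, exhibiting $\mathcal{D}$ as an increasing function of $\mathscr{K}$ up to controllable error, and you yourself flag this as the main obstacle and do not carry it out. No such two-sided bound is established anywhere in your sketch (and it cannot hold as a function of $\mathscr{K}(M)$ alone, since matrices with equal $\mathscr{K}$ can have different $\mathcal{D}$). The paper avoids this entirely by letting $\kappa$ depend on the \emph{original} matrix: it sets $\kappa=\mathscr{K}(M)+\frac{1}{\mathcal{D}(M,\mu)+1}-1$ as in~\eqref{eq:step_ineq_5}, shows via~\eqref{eq:theo_estim} that $0\leq\kappa<\mathscr{K}(M)$, and observes that $\mathscr{K}(M')\leq\mathscr{K}(M)-\kappa$ forces the one-sided \emph{upper} bound $\frac{1}{1-\mathscr{K}(M')}-1$ on $\mathcal{D}(M',\cdot)$ to fall strictly below the \emph{actual current value} $\mathcal{D}(M,\mu)$. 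Only the upper estimate of Lemma~\ref{lemma1} is needed; the ``index-switching'' and monotonicity difficulties you anticipate never arise. As written, your proposal does not contain a proof of~\eqref{eq:second_proposition_b}, and the program you outline for it is substantially harder than the intended argument and of doubtful feasibility.
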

This proposition \ref{proposition2} meets the expectations that improving
the quality of input data eventually brings the expected results.
In other words if the inconsistency index for the given matrix $M$
is appropriately reduced then there is a guarantee that discrepancy
will also be reduced. 

Both are defined in the context of any fixed inconsistency index $\textit{IC}$.
Therefore, when determining if the procedure has the given property
a suitable index $\textit{IC}$ needs to be taken into account.

\section{Properties of the Eigenvalue based Priority Deriving Procedure\label{sec:Properties-of-the-Eigenvalue-based-PDP}}

Of course the value $\mathcal{D}(M,\mu_{P(M)})=0$ is optimal from
the perspective of the decision maker (there is no doubt what to choose
since every single expert judgement perfectly matches the ranking
result). Since, decreasing the inconsistency $\textit{IC}(M)$ entails
decreasing\\
$\mathcal{D}(M,\mu_{P(M)})$ down to $0$, decreasing $\textit{IC}(M)$
makes sense. Therefore, meeting the properties presented in (Sec.
\ref{sec:Properties-of-the}) provides arguments for decreasing inconsistency
in $M$. It also provides strong arguments for using the given pair
$\textit{IC}$ and $P$ together. In this section the most popular
eigenvalue based priority deriving procedure $P_{\textit{ev}}$ and
two applicable inconsistency indices $\mathcal{K}$ and $\mathcal{S}$
are examined. Both properties, as proposed in (Sec. \ref{sec:Properties-of-the}),
are confirmed for the pair $(\mathcal{K},P_{\textit{ev}})$. For the
pair $(S,P_{\textit{ev}})$ the first property is proven. 

To demonstrate that for the first pair $(\mathcal{K},P_{\textit{ev}})$
both properties hold, and the one property implies the other, first
let us prove the following auxiliary theorem. 
\begin{lemma}
\label{lemma1}For every pairwise comparisons matrix $M\in\mathcal{M}_{\mathbb{R}_{+}}(n)$
and the eigenvalue based pairwise comparisons procedure $P_{\textit{ev}}$
it holds that: 
\begin{equation}
\alpha\leq\epsilon(i,j,\mu_{P_{\textit{ev}}(M)})\leq\frac{1}{\alpha}\label{eq:lemma_eq}
\end{equation}

where $\alpha\overset{\textit{df}}{=}1-\mathcal{K}(M)$. \end{lemma}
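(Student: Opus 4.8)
The plan is to show that, with respect to the principal eigenvector, each quantity $\epsilon(i,j,\mu_{P_{\textit{ev}}(M)})$ is a weighted average of triad ratios that Koczkodaj's index bounds on both sides; since $[\alpha,1/\alpha]$ is exactly the range of those triad ratios, the average must stay inside it. Throughout I assume $i\neq j$ (for $i=j$ the error is trivially $1$).

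First I would recast the index (\ref{eq:1-koczkod_inc_idx-2}) as a two-sided bound. For distinct $i,j,k$ put $r=\frac{m_{ik}m_{kj}}{m_{ij}}$; a short case check gives $\min\{|1-\frac1r|,|1-r|\}=1-\min\{r,\frac1r\}$, so that $\mathcal{K}(M)=\max_{i,j,k}\big(1-\min\{r,\tfrac1r\}\big)$. Because this is a maximum, every distinct triple satisfies
\[
\alpha\le\frac{m_{ik}m_{kj}}{m_{ij}}\le\frac{1}{\alpha},\qquad \alpha=1-\mathcal{K}(M).
\]
Since the text records $\mathcal{K}(M)\in[0,1)$, we have $0<\alpha\le 1\le\frac1\alpha$, so in particular $1\in[\alpha,1/\alpha]$.

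Next I would bring in the eigenvector. Writing $\mu(c_k)$ for the components of $\mu_{P_{\textit{ev}}(M)}$, which are strictly positive by Perron--Frobenius, the relation $M\mu=\lambda_{\textit{max}}\mu$ reads $\lambda_{\textit{max}}\mu(c_i)=\sum_k m_{ik}\mu(c_k)$. The factor $\lambda_{\textit{max}}$ then cancels from the ratio, giving
\[
\epsilon(i,j)=\frac{\mu(c_i)}{m_{ij}\mu(c_j)}=\frac{\sum_k m_{ik}\mu(c_k)}{\sum_k m_{ij}m_{jk}\mu(c_k)}=\frac{\sum_k x_k\,w_k}{\sum_k w_k},
\]
where $w_k=m_{ij}m_{jk}\mu(c_k)>0$ and $x_k=\frac{m_{ik}}{m_{ij}m_{jk}}$. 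Thus $\epsilon(i,j)$ is a genuine weighted average, with strictly positive weights, of the numbers $x_k$.

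It remains to bound each $x_k$. Using reciprocity $m_{kj}=1/m_{jk}$ we have $x_k=\frac{m_{ik}m_{kj}}{m_{ij}}$, which for $k\notin\{i,j\}$ is exactly the triad ratio bounded in the first step and hence lies in $[\alpha,1/\alpha]$; for $k\in\{i,j\}$ it equals $1$ (because $m_{ii}=m_{jj}=1$), which also lies in $[\alpha,1/\alpha]$. A weighted average of numbers in $[\alpha,1/\alpha]$ remains in $[\alpha,1/\alpha]$, which is precisely (\ref{eq:lemma_eq}). The only real content lies in recognizing that the eigenvalue equation turns $\epsilon(i,j)$ into a convex combination (so that $\lambda_{\textit{max}}$ and the dimension $n$ disappear) and that the averaged quantities are the very triad ratios Koczkodaj controls; I expect the main obstacle to be the careful bookkeeping of the boundary indices $k=i,j$ together with the remark that $\alpha\le 1$ places the value $1$ inside $[\alpha,1/\alpha]$, after which the conclusion needs no estimate on $\lambda_{\textit{max}}$.
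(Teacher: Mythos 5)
Your proof is correct and follows essentially the same route as the paper: both derive the two-sided triad bound $\alpha\, m_{ik}m_{kj}\leq m_{ij}\leq \frac{1}{\alpha}m_{ik}m_{kj}$ from the definition of $\mathcal{K}$ and then use the eigenvector equation $M\mu=\lambda_{\textit{max}}\mu$ to bound $\epsilon(i,j)$ as a ratio of sums, your ``weighted average'' phrasing being just a repackaging of the paper's term-by-term estimate. The only difference is cosmetic, though to your credit you explicitly check the degenerate indices $k\in\{i,j\}$ (where the triad ratio equals $1$) and the fact that $1\in[\alpha,1/\alpha]$, points the paper passes over silently.
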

\begin{proof}
Following the equation \ref{eq:1-koczkod_inc_idx-2}, Koczkodaj's
distance inconsistency index $\mathscr{K}(M)$, in short $\mathscr{K}$,
means that the maximal local inconsistence for some maximal triad
$m_{pq},m_{qr}$ and $m_{pr}$ is $\mathscr{K}$. Thus, in the case
of any triad in form of $m_{ik},m_{kj},m_{ij}$ it must hold that:

\begin{equation}
\mathscr{K}\geq\min\left\{ \left|1-\frac{m_{ij}}{m_{ik}m_{kj}}\right|,\left|1-\frac{m_{ik}m_{kj}}{m_{ij}}\right|\right\} \label{eq:the_eq_2}
\end{equation}

This means that either: 

\begin{equation}
m_{ij}\leq m_{ik}m_{kj}\,\,\text{implies}\,\,\mathscr{K}\geq1-\frac{m_{ij}}{m_{ik}m_{kj}}\label{eq:the_eq_3}
\end{equation}

or

\begin{equation}
m_{ik}m_{kj}\leq m_{ij}\,\,\text{implies}\,\,\mathscr{K}\geq1-\frac{m_{ik}m_{kj}}{m_{ij}}\label{eq:the_eq_4}
\end{equation}

is true. Let us denote $\alpha\overset{\textit{df}}{=}1-\mathscr{K}$. The
statements above can then be written in the form: 

\begin{equation}
m_{ij}\leq m_{ik}m_{kj}\,\,\text{implies}\,\, m_{ij}\geq\alpha\cdot m_{ik}m_{kj}\label{eq:the_eq_5}
\end{equation}

\begin{equation}
m_{ik}m_{kj}\leq m_{ij}\,\,\text{implies}\,\,\frac{1}{\alpha}\cdot m_{ik}m_{kj}\geq m_{ij}\label{eq:the_eq_6}
\end{equation}

Since $\alpha\leq1$ , both cases (\ref{eq:the_eq_5}) and (\ref{eq:the_eq_6})
lead independently to the common conclusion that: 
\begin{equation}
\alpha\cdot m_{ik}m_{kj}\leq m_{ij}\leq\frac{1}{\alpha}m_{ik}m_{kj}\label{eq:triad_estim}
\end{equation}

for every $i,j,k\in\{1,\ldots,n\}$. 

On the other hand, following the $P_{\textit{ev}}$ procedure \cite{Saaty2013dk}
the vector $\mu\overset{\textit{df}}{=}\mu_{P_{\textit{ev}}(M)}$
is the principal eigenvector of $M$. Thus, it satisfies the equation:
\begin{equation}
M\mu=\lambda_{\textit{max}}\mu\label{eq:eigen_eq}
\end{equation}

where $\lambda_{\textit{max}}$ is the principal eigenvalue of $M$.
The i-th equation of (\ref{eq:eigen_eq}) has the form: 
\begin{equation}
m_{i1}\mu(c_{1})+\ldots+m_{in}\mu(c_{n})=\lambda_{\textit{max}}\cdot\mu(c_{i})\label{eq:single_eigen_eq}
\end{equation}

In other words $\epsilon$ can be written as:
\begin{equation}
\epsilon(i,j,\mu)\overset{\textit{df}}{=}\frac{1}{m_{ij}}\cdot\frac{\mu(c_{i})}{\mu(c_{j})}=\frac{1}{m_{ij}}\cdot\frac{m_{i1}\mu(c_{1})+\ldots+m_{in}\mu(c_{n})}{m_{j1}\mu(c_{1})+\ldots+m_{jn}\mu(c_{n})}\label{eq:kappa_eq2}
\end{equation}

Applying (\ref{eq:triad_estim}) to the left side of (\ref{eq:single_eigen_eq}),
we obtain
\begin{equation}
m_{i1}\mu(c_{1})+\ldots+m_{in}\mu(c_{n})\leq\frac{1}{\alpha}\left(m_{ij}m_{j1}\mu(c_{1})+\ldots+m_{ij}m_{jn}\mu(c_{n})\right)\label{eq:final_estim_1}
\end{equation}

and accordingly: 
\begin{equation}
\alpha\left(m_{ij}m_{j1}\mu(c_{1})+\ldots+m_{ij}m_{jn}\mu(c_{n})\right)\leq m_{i1}\mu(c_{1})+\ldots+m_{in}\mu(c_{n})\label{eq:final_estim_2}
\end{equation}

Thus (\ref{eq:final_estim_1}) means that: 
\begin{equation}
\frac{1}{m_{ij}}\cdot\frac{\mu(c_{i})}{\mu(c_{j})}\leq\frac{1}{m_{ij}}\cdot\left(\frac{1}{\alpha}\cdot\frac{m_{ij}\left(m_{j1}\mu(c_{1})+\ldots+m_{jn}\mu(c_{n})\right)}{m_{j1}\mu(c_{1})+\ldots+m_{jn}\mu(c_{n})}\right)=\frac{1}{\alpha}\label{eq:final_estim_3}
\end{equation}

and accordingly (\ref{eq:final_estim_2}) implies that

\begin{equation}
\alpha=\frac{1}{m_{ij}}\cdot\left(\alpha\cdot\frac{m_{ij}\left(m_{j1}\mu(c_{1})+\ldots+m_{jn}\mu(c_{n})\right)}{m_{j1}\mu(c_{1})+\ldots+m_{jn}\mu(c_{n})}\right)\leq\frac{1}{m_{ij}}\cdot\frac{\mu(c_{i})}{\mu(c_{j})}\label{eq:final_estim_4}
\end{equation}

Both the above inequalities lead to the conclusion that: 
\begin{equation}
\alpha\leq\epsilon(i,j,\mu)\leq\frac{1}{\alpha}\label{eq:final_assert}
\end{equation}

which is the desired assertion. 
\end{proof}
Equipped with the Lemma \ref{lemma1}, we can easily prove the properties
of $P_{\textit{ev}}$ with respect to the inconsistency index $\mathcal{K}$.
\begin{theorem}
The eigenvalue based pairwise comparisons procedure $P_{\textit{ev}}$
is regular and follows the inconsistency with respect to Koczkodaj's
inconsistency index $\mathcal{K}$. \end{theorem}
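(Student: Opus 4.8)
The plan is to reduce all three assertions to a single master inequality extracted from Lemma~\ref{lemma1}. First I would fix a pair $i,j$ and rewrite the conclusion of the lemma, $\alpha\le\epsilon(i,j,\mu_{P_{\textit{ev}}(M)})\le\tfrac{1}{\alpha}$ with $\alpha=1-\mathcal{K}(M)$, in terms of the local discrepancy $\mathcal{E}$. Since $\epsilon\le\tfrac{1}{\alpha}$ gives $\epsilon-1\le\tfrac{1}{\alpha}-1$, while $\epsilon\ge\alpha$ gives $\tfrac{1}{\epsilon}-1\le\tfrac{1}{\alpha}-1$, both arguments of the maximum defining $\mathcal{E}$ are bounded by $\tfrac{1}{\alpha}-1=\tfrac{\mathcal{K}(M)}{1-\mathcal{K}(M)}$. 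Taking the maximum over all $i,j$ and noting that reciprocity forces at least one of $\epsilon,\tfrac{1}{\epsilon}$ to exceed $1$ (so every $\mathcal{E}(i,j)\ge0$), I obtain
\[
0\le\mathcal{D}(M,\mu_{P_{\textit{ev}}(M)})\le\frac{\mathcal{K}(M)}{1-\mathcal{K}(M)}.
\]

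Regularity (Proposition~\ref{proposition1}) is then immediate: setting $\mathcal{K}(M)=0$ collapses the master inequality to $0\le\mathcal{D}\le0$, so $\mathcal{D}(M,\mu_{P_{\textit{ev}}(M)})=0$. The limit clause \eqref{eq:second_proposition_a} of Proposition~\ref{proposition2} follows by a squeeze argument, since the bounding function $x\mapsto\tfrac{x}{1-x}$ is continuous on $[0,1)$ and vanishes at $0$; hence $\mathcal{K}(M)\to0$ drives $\mathcal{D}(M,\mu_{P_{\textit{ev}}(M)})\to0$. Both of these are routine once the master inequality is established.

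The genuinely delicate part is the strict-monotonicity clause \eqref{eq:second_proposition_b}, and I expect it to be the main obstacle. The upper estimate of Lemma~\ref{lemma1} alone is not enough: to conclude $\mathcal{D}(M,\mu_{P_{\textit{ev}}(M)})>\mathcal{D}(M',\mu_{P_{\textit{ev}}(M)})$ one must bound $\mathcal{D}$ for $M$ \emph{from below} while bounding it for $M'$ from above, whereas the lemma only yields one-sided control. A further complication is that the second discrepancy in \eqref{eq:second_proposition_b} evaluates a ranking against a matrix for which it need not be the eigenvector, so the lemma does not apply directly to that term. My plan would be to exploit the strict monotonicity and continuity of $x\mapsto\tfrac{x}{1-x}$ together with the freedom afforded by the existential, ``reasonably small $\kappa$'' formulation, choosing $\kappa$ so that a gap of at least $\kappa$ between $\mathcal{K}(M)$ and $\mathcal{K}(M')$ forces a strict gap between the induced bounds. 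The essential missing ingredient is a complementary \emph{lower} bound on $\mathcal{D}(M,\mu_{P_{\textit{ev}}(M)})$ in terms of $\mathcal{K}(M)$ — ideally by arguing that the triad attaining $\mathcal{K}(M)$ also produces a proportionally large local error $\mathcal{E}$ for the eigenvector ranking. Supplying that lower bound, rather than the clean upper estimate already in hand, is where the real work of this theorem resides.
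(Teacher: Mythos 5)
Your treatment of the first two claims (regularity and the limit clause \eqref{eq:second_proposition_a}) is correct and coincides with the paper's: both rest on the master inequality $0\le\mathcal{D}(M,\mu_{P_{\textit{ev}}(M)})\le\frac{1}{\alpha}-1=\frac{\mathcal{K}(M)}{1-\mathcal{K}(M)}$ obtained from Lemma~\ref{lemma1}. The genuine gap is that you stop short of proving the third claim, the strict-decrease clause \eqref{eq:second_proposition_b}, and you misdiagnose what is needed to close it. No lower bound on $\mathcal{D}(M,\mu_{P_{\textit{ev}}(M)})$ in terms of $\mathcal{K}(M)$ is required, because the proposition only asserts the \emph{existence} of some $0<\kappa<\mathcal{K}(M)$, and $\kappa$ is allowed to depend on the matrix $M$ --- hence on the actual value $\mathcal{D}(M,\mu)$ itself. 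The paper takes
\begin{equation*}
\kappa\overset{\textit{df}}{=}\mathcal{K}(M)+\frac{1}{\mathcal{D}(M,\mu)+1}-1 .
\end{equation*}
If $\mathcal{K}(M')\le\mathcal{K}(M)-\kappa=1-\frac{1}{\mathcal{D}(M,\mu)+1}$, then $\frac{1}{1-\mathcal{K}(M')}-1\le\mathcal{D}(M,\mu)$, and applying the master inequality to $M'$ (with its own eigenvector) bounds the discrepancy of $M'$ by $\frac{1}{1-\mathcal{K}(M')}-1$, hence by $\mathcal{D}(M,\mu)$. The master inequality for $M$ is exactly what guarantees $\kappa\ge0$, and $\kappa<\mathcal{K}(M)$ holds whenever $\mathcal{D}(M,\mu)>0$. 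So the only ingredient beyond what you already have is the observation that $\kappa$ may be tailored to $\mathcal{D}(M,\mu)$; the ``real work'' you anticipate (a lower bound extracted from the maximal triad) is not needed and is not supplied by the paper either.

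That said, two of your side remarks do point at real blemishes in the paper rather than in your strategy: as literally written, \eqref{eq:second_proposition_b} compares $\mathcal{D}(M',\mu_{P(M)})$, i.e.\ evaluates $M'$ against the eigenvector of $M$, to which Lemma~\ref{lemma1} does not apply, and the paper's argument tacitly reads this term as $\mathcal{D}(M',\mu_{P(M')})$; moreover the hypothesis $\mathcal{K}(M)\ge\mathcal{K}(M')+\kappa$ with the exact $\kappa$ above yields only $\le$ rather than the strict inequality claimed, so one should either take $\kappa$ slightly larger or note the boundary case. These are defects of the statement and of the published proof, not obstacles to the construction; your proof as it stands is simply incomplete on the third clause.
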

\begin{proof}
Let $\mu\overset{\textit{df}}{=}P_{\textit{ev}}(M)$ be fixed and
known. Thus, from Lemma \ref{lemma1} 
\begin{equation}
\alpha-1\leq\epsilon(i,j)-1\leq\frac{1}{\alpha}-1\label{eq:theo_eq1}
\end{equation}

for all $i,j\in\{1,\ldots,n\}$. The same applies to $\epsilon(j,i)$.
Thus, due to the reciprocity of $\epsilon$ also 
\begin{equation}
\alpha-1\leq\frac{1}{\epsilon(i,j)}-1\leq\frac{1}{\alpha}-1\label{eq:eq:theo_eq2}
\end{equation}

It is easy to observe that 
\begin{equation}
0\leq\max\{\epsilon(i,j)-1,\frac{1}{\epsilon(i,j)}-1\}\leq\frac{1}{\alpha}-1\label{eq:theo_eq3}
\end{equation}

In other words: 
\begin{equation}
0\leq\mathcal{E}(i,j)\leq\frac{1}{\alpha}-1\label{eq:theo_eq4}
\end{equation}

Since the above (\ref{eq:theo_eq4}) is true for all $i,j\in\{1,\ldots,n\}$
then, due to (Def. \ref{def:discrepancy_definition}), holds that:
\begin{equation}
0\leq\mathcal{D}(M,\mu)\leq\frac{1}{\alpha}-1\label{eq:theo_estim}
\end{equation}

Since $\alpha\overset{\textit{df}}{=}1-\mathscr{K}(M)$, then 
\begin{equation}
\mathscr{K}(M)\rightarrow0\Rightarrow\alpha\rightarrow1\Rightarrow\left(\frac{1}{\alpha}-1\right)\rightarrow0\label{eq:theo_converg}
\end{equation}

Thus, due to (\ref{eq:theo_estim}) 
\begin{equation}
\mathscr{K}(M)\rightarrow0\,\,\,\,\,\text{implies}\,\,\,\,\,\mathcal{D}(M,\mu)\rightarrow0\label{eq:concl_theo_1}
\end{equation}

which proves the first postulate (\ref{eq:second_proposition_a})
of the second property. The estimation (\ref{eq:theo_estim}) also
allows one to prove the first property. Simply, for $\mathscr{K}(M)=0$
means that $\left(\frac{1}{\alpha}-1\right)=0$ therefore 
\begin{equation}
0\leq\mathcal{D}(M,\mu)\leq0\,\,\,\,\,\text{implies}\,\,\,\,\,\mathcal{D}(M,\mu)=0\label{eq:concl_theo_2}
\end{equation}

An important question during the process of inconsistency reduction
is whether the improvement is large enough to have an actual impact
on the final ranking. Let us consider two matrices $M$ and $M'$
where the second one was obtained from the first one as the result
of an inconsistency reduction process (judgment revision \cite{Ho2008iahp},
inconsistency reduction algorithm \cite{Koczkodaj2010odbi}). The
question of how much (at least) the inconsistency should be reduced
to be important for the all ranked concepts boils down to the question
of the reasonably small $\kappa\in\mathbb{R}_{+}\backslash\{0\}$.
To answer this question look at (\ref{eq:theo_estim}). In particular
the inequality (\ref{eq:second_proposition_b}) is met if $\mathcal{K}(M')$
is so that 
\begin{equation}
\mathcal{D}(M',\mu)\leq\left(\frac{1}{1-\mathcal{K}(M')}-1\right)<\mathcal{D}(M,\mu)\label{eq:step_ineq_2}
\end{equation}

Thus, in particular we demand that 
\begin{equation}
\mathcal{K}(M')<1-\frac{1}{\mathcal{D}(M,\mu)+1}\label{eq:step_ineq_3}
\end{equation}

hence,

\begin{equation}
\mathcal{K}(M)-\mathcal{K}(M')>\mathcal{K}(M)+\frac{1}{\mathcal{D}(M,\mu)+1}-1\label{eq:step_ineq_4}
\end{equation}

Therefore, the suitable $\kappa$ candidate is: 
\begin{equation}
\kappa\overset{\textit{df}}{=}\mathcal{K}(M)+\frac{1}{\mathcal{D}(M,\mu)+1}-1\label{eq:step_ineq_5}
\end{equation}

It is clear that due to the (\ref{eq:step_ineq_4}) $\kappa<\mathcal{K}(M)<1$.
It is also $0\leq\kappa$ ($0<\kappa$, when $\mathcal{K}(M)>0)$.
The fact that $\kappa$ is not negative is a simple conclusion from
(\ref{eq:theo_estim}). It holds that: 
\begin{equation}
\mathcal{D}(M,\mu)\leq\frac{1}{1-\mathcal{K}(M)}-1\label{eq:step_ineq_6}
\end{equation}

Thus, it is easy to see that: 
\begin{equation}
\mathcal{K}(M)+\frac{1}{\mathcal{D}(M,\mu)+1}\geq1\label{eq:step_ineq_7}
\end{equation}

hence the right side of the equation (\ref{eq:step_ineq_5}) is greater
or equal to %
\footnote{Indeed, for the fully consistent $M_{1}$ ($\mathcal{K}(M_{1})=\mathcal{D}(M_{1},\mu)=0$),
an $\kappa$ candidate is $0$%
} $0$. Therefore, due to the way in which it was constructed, $\kappa$
as proposed in (\ref{eq:step_ineq_5}) satisfies the conditions of
the second property.
\end{proof}
Lemma \ref{lemma1} immediately implies another useful assertion that
allow us to use Koczkodaj's inconsistency to relatively estimate one
entry of the principal eigenvector by another. This leads to the following
corollary. 
\begin{corollary}
Every entry in the principal eigenvector bounds each other according
to the following inequality

\begin{equation}
\alpha m_{ij}\mu(c_{j})\leq\mu(c_{i})\leq\frac{1}{\alpha}m_{ij}\mu(c_{j})\label{eq:lemma_eq-1-1}
\end{equation}

where $\alpha\overset{\textit{df}}{=}1-\mathcal{K}(M)$ and $M\mu=\lambda_{\textit{max}}\mu$. 
\end{corollary}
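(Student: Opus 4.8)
The plan is to obtain this corollary as an immediate algebraic rearrangement of Lemma~\ref{lemma1}. That lemma already establishes the two-sided bound
\begin{equation*}
\alpha\leq\epsilon(i,j,\mu_{P_{\textit{ev}}(M)})\leq\frac{1}{\alpha},
\end{equation*}
so the only work remaining is to unwind the definition of $\epsilon$ and clear denominators.

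First I would recall the defining expression for the ranking error, namely $\epsilon(i,j,\mu)=\tfrac{1}{m_{ij}}\cdot\tfrac{\mu(c_{i})}{\mu(c_{j})}$ as given in (\ref{eq:K_def}) and reused in (\ref{eq:kappa_eq2}). Substituting this into the conclusion of Lemma~\ref{lemma1} yields
\begin{equation*}
\alpha\leq\frac{1}{m_{ij}}\cdot\frac{\mu(c_{i})}{\mu(c_{j})}\leq\frac{1}{\alpha}.
\end{equation*}
The decisive observation is that the factor $m_{ij}\mu(c_{j})$ is strictly positive: by hypothesis $m_{ij}\in\mathbb{R}_{+}$, and by the Perron--Frobenius theorem the principal eigenvector $\mu$ has strictly positive components, so $\mu(c_{j})>0$. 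Multiplying the displayed chain of inequalities through by this positive quantity preserves both inequality signs and delivers exactly
\begin{equation*}
\alpha\, m_{ij}\mu(c_{j})\leq\mu(c_{i})\leq\frac{1}{\alpha}m_{ij}\mu(c_{j}),
\end{equation*}
which is the asserted bound (\ref{eq:lemma_eq-1-1}).

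There is really no obstacle here beyond bookkeeping; the entire content of the corollary is transported from Lemma~\ref{lemma1} by a single multiplication. The one point that must be stated explicitly, rather than assumed silently, is the positivity of $m_{ij}\mu(c_{j})$, since it is precisely what licenses multiplying through an inequality without reversing it. Once that is noted, the result follows in one line.
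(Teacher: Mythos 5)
Your proposal is correct and matches the paper's intent exactly: the paper offers no explicit proof, stating only that the corollary follows immediately from Lemma~\ref{lemma1}, and your unwinding of the definition of $\epsilon$ followed by multiplication through by the positive quantity $m_{ij}\mu(c_{j})$ is precisely that immediate derivation. Your explicit note on why positivity (from $m_{ij}\in\mathbb{R}_{+}$ and Perron--Frobenius) licenses the multiplication is a welcome bit of rigor the paper leaves tacit.
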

The situation of discrepancy in which the expert judgment $m_{ij}$
differs from the ranking result $\nicefrac{\mu(c_{i})}{\mu(c_{j})}$
is not comfortable for people interested in the ranking results. The
high discrepancy may be the cause of complaints, doubts as to the
result, lack of a sense of justice and so on. In other words, the
high discrepancy may cause customer satisfaction with the results
of the ranking to be low. Reversely, when the discrepancy is small,
customers are likely to be more satisfied than when the discrepancy
is high. Although some customers will probably never be totally satisfied
with the ranking (this particularly applies to those who support the
concept of lower values $\mu$) in general it can be assumed that
the higher the discrepancy the lower the satisfaction, and conversely,
the lower the discrepancy the higher the satisfaction. Of course,
the most attention-grabbing and emotive situations are those of the
highest discrepancy. Reducing the inconsistency $\mathcal{K}(M)$
by (at least) $\kappa$ allows each time to reduce the most severe
and unsatisfying cases of discrepancy. With help comes the second
property that leads to the following conclusion. 
\begin{corollary}
For the eigenvalue based pairwise comparisons procedure $P_{\textit{ev}}$,
the inconsistency index $\mathcal{K}$, and the given PC matrix $M$,
it is recommended to reduce the inconsistency $\mathcal{K}(M)$ by
such $\kappa$ that guarantees reduction of the discrepancy $\mathcal{D}(M,\mu)$.
One possible $\kappa$ is: 
\begin{equation}
\kappa\overset{\textit{df}}{=}\mathcal{K}(M)+\frac{1}{\mathcal{D}(M,\mu)+1}-1\label{eq:step_ineq_9}
\end{equation}
\end{corollary}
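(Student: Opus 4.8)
The plan is to observe that this corollary is not a genuinely new result but a repackaging of the construction already carried out in the proof of the preceding Theorem, now phrased as a practical recommendation. The entire content rests on the discrepancy bound $\mathcal{D}(M,\mu)\le\frac{1}{1-\mathcal{K}(M)}-1$ established there as (\ref{eq:theo_estim}), which is itself an immediate consequence of Lemma \ref{lemma1}. So the first step is simply to recall that, for the pair $(\mathcal{K},P_{\textit{ev}})$, the global discrepancy is controlled from above by a monotone increasing function of Koczkodaj's index alone.

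Next I would translate the phrase ``guarantees reduction of the discrepancy'' into a verifiable inequality. Let $M'$ be the revised matrix and $\mu$ the fixed eigenvector ranking. Because we control $\mathcal{D}(M',\mu)$ only through its upper bound, it suffices to force that upper bound below the current discrepancy, i.e.\ to require $\frac{1}{1-\mathcal{K}(M')}-1<\mathcal{D}(M,\mu)$. This is exactly (\ref{eq:step_ineq_2}), and solving it for $\mathcal{K}(M')$ gives $\mathcal{K}(M')<1-\frac{1}{\mathcal{D}(M,\mu)+1}$. Subtracting both sides from $\mathcal{K}(M)$ rearranges this into $\mathcal{K}(M)-\mathcal{K}(M')>\mathcal{K}(M)+\frac{1}{\mathcal{D}(M,\mu)+1}-1$, whose right-hand side is precisely the candidate $\kappa$ of (\ref{eq:step_ineq_9}). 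Hence reducing $\mathcal{K}(M)$ by strictly more than this $\kappa$ yields $\mathcal{D}(M',\mu)\le\frac{1}{1-\mathcal{K}(M')}-1<\mathcal{D}(M,\mu)$, which is the promised guarantee and matches the form (\ref{eq:second_proposition_b}) of Proposition \ref{proposition2}.

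Finally I would check that the proposed $\kappa$ is a legitimate, nonnegative reduction amount, so that the recommendation is not vacuous. This again follows from (\ref{eq:theo_estim}): rearranging $\mathcal{D}(M,\mu)\le\frac{1}{1-\mathcal{K}(M)}-1$ yields $\mathcal{K}(M)+\frac{1}{\mathcal{D}(M,\mu)+1}\ge1$, that is $\kappa\ge0$, with strict positivity whenever $\mathcal{K}(M)>0$.

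The step I expect to require the most care is not any computation but the logical point that we work with an upper bound on $\mathcal{D}(M',\mu)$ rather than its exact value: the argument certifies discrepancy reduction by showing that the worst-case bound for the revised matrix already falls below the current discrepancy, and one must be explicit that this is a \emph{sufficient} (not necessary) condition, so the stated $\kappa$ is only ``one possible'' threshold rather than the sharpest one.
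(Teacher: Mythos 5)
Your proposal is correct and follows essentially the same route as the paper: the paper derives this $\kappa$ inside the proof of the preceding theorem by forcing the upper bound $\frac{1}{1-\mathcal{K}(M')}-1$ below $\mathcal{D}(M,\mu)$, solving for $\mathcal{K}(M')$, and verifying $\kappa\geq 0$ from the same estimate (\ref{eq:theo_estim}). Your explicit remark that this certifies only a sufficient threshold (via the worst-case bound) is a point the paper leaves implicit, but the argument is otherwise identical.
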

\begin{theorem}
The eigenvalue based pairwise comparisons procedure $P_{\textit{ev}}$
is regular with respect to Saaty's inconsistency index $\mathcal{S}$. \end{theorem}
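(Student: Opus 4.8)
The plan is to reduce this statement to the regularity result already established for Koczkodaj's index. The key observation is that both inconsistency indices vanish under exactly the same condition, namely consistency of $M$, so the hypothesis $\mathcal{S}(M)=0$ can be translated into $\mathcal{K}(M)=0$ and the preceding theorem applied verbatim.

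First I would recall that $\mathcal{S}(M)=0$ is equivalent to $\lambda_{\textit{max}}=n$ by the definition of Saaty's index in (\ref{eq:Consistency_Index_AHP}), and that $\lambda_{\textit{max}}=n$ holds if and only if $M$ is consistent, as noted immediately after that definition. Next I would show that consistency is in turn equivalent to $\mathcal{K}(M)=0$: inspecting (\ref{eq:1-koczkod_inc_idx-2}), the inner minimum over a triad $m_{ik},m_{kj},m_{ij}$ equals zero precisely when $m_{ij}=m_{ik}m_{kj}$, so the outer maximum vanishes exactly when every triad satisfies the consistency condition of Definition \ref{def:A-matrix-recip}. Chaining these equivalences yields $\mathcal{S}(M)=0\Rightarrow\mathcal{K}(M)=0$. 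With $\mathcal{K}(M)=0$ in hand, the previous theorem (regularity of $P_{\textit{ev}}$ with respect to $\mathcal{K}$) immediately gives $\mathcal{D}(M,\mu_{P_{\textit{ev}}(M)})=0$, which is exactly the defining implication (\ref{eq:first_proposition}) of regularity for the index $\mathcal{S}$.

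Alternatively, a self-contained route avoids the reduction. From $\lambda_{\textit{max}}=n$ and consistency one writes $m_{ij}=m_{i1}m_{1j}=m_{i1}/m_{j1}$ using reciprocity, so setting $w_{i}\overset{\textit{df}}{=}m_{i1}$ gives the rank-one form $m_{ij}=w_{i}/w_{j}$. A direct computation then shows $(Mw)_{i}=\sum_{j}m_{ij}w_{j}=n\,w_{i}$, so $w$ is the Perron eigenvector and $\mu_{P_{\textit{ev}}(M)}\propto w$. Substituting into (\ref{eq:K_def}) yields $\epsilon(i,j)=1$ for all $i,j$, whence $\mathcal{E}(i,j)=0$ and $\mathcal{D}(M,\mu_{P_{\textit{ev}}(M)})=0$.

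I expect the only subtle point in either route to be the verification that the inner minimum in Koczkodaj's index vanishes exactly at consistency (equivalently, in the direct route, the identification of $w$ as the \emph{principal} eigenvector via Perron--Frobenius rather than merely an eigenvector for $n$); both are routine but each deserves an explicit sentence, since the whole argument hinges on these two indices having a common zero set.
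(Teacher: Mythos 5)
Your proposal is correct, and your two routes bracket the paper's own argument. The paper's proof is essentially your second, self-contained route, except that it does not re-derive anything: it simply cites Saaty's Theorem~1 for the fact that $\lambda_{\textit{max}}=n$ forces $m_{ij}=\mu(c_{i})/\mu(c_{j})$ for the principal eigenvector $\mu$, and then concludes $\mathcal{E}(i,j)=0$ and $\mathcal{D}(M,\mu)=0$ exactly as you do. Your explicit verification (writing $m_{ij}=w_{i}/w_{j}$ with $w_{i}=m_{i1}$, checking $Mw=nw$, and invoking Perron--Frobenius to identify $w$ as the principal eigenvector rather than just an eigenvector for the eigenvalue $n$) fills in the content of that citation and is the more careful version of the same argument. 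Your first route, reducing $\mathcal{S}(M)=0$ to $\mathcal{K}(M)=0$ via the common zero set ``$M$ is consistent'' and then applying the regularity theorem already proved for $(\mathcal{K},P_{\textit{ev}})$, is genuinely different from what the paper does and is a nice economy: it reuses Lemma~\ref{lemma1} instead of Saaty's Theorem~1. Its only blemish is that Definition~\ref{def:Koczkodaj's-inconsistency-index} restricts $\mathscr{K}$ to $n>2$, so the reduction is silent on the (trivial) case $n=2$; your direct route covers that case, so together the two arguments are complete.
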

\begin{proof}
The proof of the first property immediately results from \cite{Saaty1977asmf},
i.e. it is easy to see that $\mathcal{S}(M)=0$ if and only if $\lambda_{\textit{max}}=n$.
Hence, according to \cite[theorem 1]{Saaty1977asmf}, the equality
$\lambda_{\textit{max}}=n$ implies that $m_{ij}=\nicefrac{\mu(c_{i})}{\mu(c_{j})}$
for every $i,j\in\{1,\ldots,n\}$. Thus, every $\mathcal{E}(i,j)=0$,
hence, $\mathcal{D}(M,\mu)=0$.
\end{proof}

\section{Properties of Koczkodaj's inconsistency index\label{sec:Properties-of-the-K-Idx}}

The theorems proven in (Sec. \ref{sec:Properties-of-the-Eigenvalue-based-PDP})
allow the use of $\mathcal{K}(M)$ for the effective estimation of
other quantities. Some of the inequalities involving $\mathcal{K}(M)$
will be presented below in the form of appropriate assertions. 
\begin{theorem}
For any PC matrix $M$ holds that

\begin{equation}
\mathcal{S}(M)\leq\frac{1}{1-\mathcal{K}(M)}-1\label{eq:theo_thesis_3}
\end{equation}

where $\mathcal{S}(M),\mathcal{K}(M)$ are Saaty's and Koczkodaj's
inconsistency indices respectively.\end{theorem}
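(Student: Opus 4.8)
The plan is to bound $\lambda_{\textit{max}}$ from above in terms of $\alpha \overset{\textit{df}}{=} 1-\mathcal{K}(M)$ using Lemma \ref{lemma1}, and then substitute into the definition of $\mathcal{S}(M)$. First I would exploit the fact that the eigenvalue equation (\ref{eq:single_eigen_eq}) holds for every row $i$: dividing the $i$-th equation by $\mu(c_i)$ and summing over all $n$ rows yields
\begin{equation}
n\lambda_{\textit{max}} = \sum_{i=1}^{n}\sum_{j=1}^{n} m_{ij}\frac{\mu(c_j)}{\mu(c_i)}.
\end{equation}
The diagonal terms ($i=j$) each equal $1$ because reciprocity forces $m_{ii}=1$, so they contribute exactly $n$ to the right-hand side.

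Next I would pair each off-diagonal term with its transpose partner. Writing $t_{ij}\overset{\textit{df}}{=} m_{ij}\mu(c_j)/\mu(c_i)$, reciprocity gives $m_{ji}\mu(c_i)/\mu(c_j)=1/t_{ij}$, so the pair $\{(i,j),(j,i)\}$ contributes $t_{ij}+1/t_{ij}$. The key observation is that $t_{ij}=\epsilon(j,i,\mu)$ (using $1/m_{ji}=m_{ij}$ in the definition (\ref{eq:K_def})), hence Lemma \ref{lemma1} applies and gives $\alpha\leq t_{ij}\leq 1/\alpha$. Since both $t_{ij}$ and $1/t_{ij}$ then lie in $[\alpha,1/\alpha]$, each is at most $1/\alpha$, so $t_{ij}+1/t_{ij}\leq 2/\alpha$.

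Collecting the $n(n-1)/2$ off-diagonal pairs, I obtain
\begin{equation}
n\lambda_{\textit{max}} \leq n + \frac{n(n-1)}{2}\cdot\frac{2}{\alpha} = n + \frac{n(n-1)}{\alpha},
\end{equation}
so that $\lambda_{\textit{max}}\leq 1+(n-1)/\alpha$. Substituting into $\mathcal{S}(M)=(\lambda_{\textit{max}}-n)/(n-1)$ cancels the $n-1$ factor and leaves exactly $\mathcal{S}(M)\leq 1/\alpha-1 = 1/(1-\mathcal{K}(M))-1$, the desired inequality.

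The main obstacle I anticipate is bookkeeping rather than a genuine difficulty: one must recognize $t_{ij}$ as a value of $\epsilon$ so that Lemma \ref{lemma1} is legitimately invoked, and note that the normalization of $\mu$ is irrelevant because $t_{ij}$ depends only on ratios $\mu(c_j)/\mu(c_i)$. A sharper route would bound $t_{ij}+1/t_{ij}$ by its true maximum $\alpha+1/\alpha$ over $[\alpha,1/\alpha]$, giving the slightly tighter estimate $\mathcal{S}(M)\leq\tfrac12(\alpha+1/\alpha)-1$; since $\alpha\leq 1$ forces $\alpha+1/\alpha\leq 2/\alpha$, this implies the stated bound, but the cruder estimate $t_{ij}+1/t_{ij}\leq 2/\alpha$ already lands precisely on $1/\alpha-1$ and is cleaner to present.
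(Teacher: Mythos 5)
Your proof is correct and rests on the same key ingredient as the paper's: applying Lemma \ref{lemma1} to the terms of the eigenvector equation and translating the resulting bound on $\lambda_{\textit{max}}$ into one on $\mathcal{S}(M)$. The only difference is presentational --- the paper cites the known single-column identity $\frac{1}{n-1}\sum_{i\neq j}(\epsilon(i,j)-1)=\mathcal{S}(M)$ and bounds each term by $\frac{1}{\alpha}-1$, whereas you re-derive the same relation self-containedly via the full double sum with transpose pairing and the estimate $t_{ij}+1/t_{ij}\leq 2/\alpha$, landing on the identical bound.
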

\begin{proof}
It is known \cite{Saaty1977asmf,Kulakowski2013nodi} that the following
inequality holds:

\begin{equation}
\frac{1}{\left(n-1\right)}\sum_{i=1,i\neq j}^{n}\left(\epsilon(i,j)-1\right)=\mathcal{S}(M)\label{eq:SM_proof_3}
\end{equation}

On the other hand from the Lemma \ref{lemma1} ($\mu=\mu_{P_{\textit{ev}}(M)}$
is omitted) we know that

\begin{equation}
\alpha\leq\epsilon(i,j)\leq\frac{1}{\alpha}\label{eq:theo3_2}
\end{equation}

hence, 

\begin{equation}
\alpha-1\leq\epsilon(i,j)-1\leq\frac{1}{\alpha}-1\label{eq:theo3_3}
\end{equation}

Therefore,
\begin{equation}
\left(n-1\right)\left(\alpha-1\right)\leq\sum_{i=1,i\neq l}^{n}\left(\epsilon(l,i)-1\right)\leq(n-1)\left(\frac{1}{\alpha}-1\right)\label{eq:theo3_4}
\end{equation}

Then by dividing both sides by $(n-1)$ and applying (\ref{eq:SM_proof_3})
we get:

\begin{equation}
\alpha-1\leq\mathcal{S}(M)\leq\frac{1}{\alpha}-1\label{eq:theo3_5}
\end{equation}

which is the desired assertion.
\end{proof}
The above theorem allows one to estimate the maximum value of Saaty's
inconsistency index using $\mathcal{K}$. In particular it is easy
to see that $\mathcal{K}(M)=0.09(09)$ implies that $\mathcal{S}(M)<0.1$.
In other words a Koczkodaj inconsistency $\mathcal{K}$ smaller than
$0.909(09)$ guarantees meeting the consistency criteria proposed
by Saaty \cite{Saaty2013dk}. Hence every\emph{ PC }matrix $M$ for
which $\mathcal{K}(M)<0.90909$ is also consistent enough in Saaty's
sense. Of course, in such a case due to the Lemma \ref{lemma1} we
can also expect some regularity in discrepancy among different paired
comparisons. 

The above theorem immediately suggest the estimation for the principal
eigenvalue of $M$. 
\begin{theorem}
\label{principal_eigen_val_estim}For any PC matrix $M$ its principal
eigenvalue $\lambda_{\textit{max}}$ is bounded as follows: 

\begin{equation}
\left(n-1\right)\left(\alpha-1\right)+n\leq\lambda_{\textit{max}}\leq(n-1)\left(\frac{1}{\alpha}-1\right)+n\label{eq:theo_4_1}
\end{equation}

where $\alpha\overset{\textit{df}}{=}1-\mathcal{K}(M)$. \end{theorem}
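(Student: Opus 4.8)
The plan is to obtain this bound as an immediate corollary of the preceding theorem (the Saaty--Koczkodaj inequality) combined with the definition of Saaty's index. The whole argument is a one-line algebraic rearrangement, so I expect no substantive obstacle; the only care needed is to verify that the manipulations preserve the direction of the inequalities.

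First I would recall the definition of Saaty's index from equation~(\ref{eq:Consistency_Index_AHP}), namely
\begin{equation}
\mathcal{S}(M)=\frac{\lambda_{\textit{max}}-n}{n-1}.\notag
\end{equation}
Since $n\geq 2$ (indeed $n>2$ for $\mathscr{K}$ to be defined), the factor $n-1$ is strictly positive, so I may solve for the principal eigenvalue without reversing any inequality, obtaining $\lambda_{\textit{max}}=(n-1)\,\mathcal{S}(M)+n$.

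Next I would invoke the previous theorem, which established the two-sided estimate $\alpha-1\leq\mathcal{S}(M)\leq\tfrac{1}{\alpha}-1$ with $\alpha\overset{\textit{df}}{=}1-\mathcal{K}(M)$ (this is equation~(\ref{eq:theo3_5})). Multiplying every term of this chain by the positive quantity $(n-1)$ and then adding $n$ to each side preserves both inequalities and gives
\begin{equation}
(n-1)(\alpha-1)+n\leq(n-1)\,\mathcal{S}(M)+n\leq(n-1)\left(\tfrac{1}{\alpha}-1\right)+n.\notag
\end{equation}
Recognising the middle expression as $\lambda_{\textit{max}}$ completes the proof.

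Alternatively, if one wished to avoid explicitly citing the preceding theorem, the same conclusion follows by feeding the triad estimate $(n-1)(\alpha-1)\leq\sum_{i\neq l}(\epsilon(l,i)-1)\leq(n-1)(\tfrac{1}{\alpha}-1)$ from equation~(\ref{eq:theo3_4}) together with the identity $\sum_{i\neq j}(\epsilon(i,j)-1)=(n-1)\mathcal{S}(M)$ from~(\ref{eq:SM_proof_3}) directly into the relation $\lambda_{\textit{max}}=(n-1)\mathcal{S}(M)+n$. Either route is routine; the substantive content lies entirely in Lemma~\ref{lemma1}, which has already been proven, and this theorem merely translates that bound from the language of Saaty's index into a statement about $\lambda_{\textit{max}}$.
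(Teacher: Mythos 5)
Your proposal is correct and follows essentially the same route as the paper: both substitute the definition $\mathcal{S}(M)=\frac{\lambda_{\textit{max}}-n}{n-1}$ into the two-sided bound $\alpha-1\leq\mathcal{S}(M)\leq\frac{1}{\alpha}-1$ from equation~(\ref{eq:theo3_5}) and rearrange, noting that $n-1>0$ preserves the inequalities. The alternative route you sketch via (\ref{eq:theo3_4}) and (\ref{eq:SM_proof_3}) is just an unfolding of the same argument, so there is nothing substantive to add.
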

\begin{proof}
Due to (Def. \ref{eq:Consistency_Index_AHP}) and (\ref{eq:theo3_5})
we obtain
\begin{equation}
\alpha-1\leq\frac{\lambda_{\textit{max}}-n}{n-1}\leq\frac{1}{\alpha}-1\label{eq:theo_4_2}
\end{equation}

thus 

\begin{equation}
\left(n-1\right)\left(\alpha-1\right)+n\leq\lambda_{\textit{max}}\leq(n-1)\left(\frac{1}{\alpha}-1\right)+n\label{eq:theo_4_3}
\end{equation}

which is the desired inequality.
\end{proof}
Since $-1<(\alpha-1)\leq0$, then $-(n-1)<(n-1)(\alpha-1)$. Therefore,
$0<\left(n-1\right)\left(\alpha-1\right)+n$, which confirms the well
known fact \cite{Saaty1977asmf} that $\lambda_{\textit{max}}$ is
positive. Since the above theorem shows that Koczkodaj's inconsistency
index $\mathcal{K}$ can be used to estimate the principal eigenvalue,
the uses of $\mathcal{K}$ extend beyond the \emph{PC} method. 

The theorems proven in \cite{Kulakowski2013nodi} allow for the formulation
of the more general Conditions of Order Preservation (See appendix
\ref{sec:Conditions-of-Order}) initially introduced by \emph{Bana
e} \emph{Costa} and\emph{ Vansnick }\cite{BanaeCosta2008acao}. Thus,
the first \emph{POP (the preservation of order preference condition)}
leads to the following new theorem. 
\begin{theorem}
\label{COP-theor-1}For the eigenvalue based pairwise comparisons
procedure $P_{\textit{ev}}$, the PC matrix $M=[m_{ij}]$, expressing
the quantitative relationships $R$ between concepts $c_{1},\ldots,c_{n}\in C$,
and the ranking $\mu_{P_{ev}(M)}=\mu$, holds that 
\begin{equation}
m_{ij}>\frac{1}{1-\mathcal{K}(M)}\,\,\,\,\,\mbox{implies}\,\,\,\,\,\mu(c_{i})>\mu(c_{j})\label{eq:cop-proof_2}
\end{equation}
\end{theorem}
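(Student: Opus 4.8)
The plan is to obtain the claim as an immediate consequence of the lower estimate supplied by Lemma \ref{lemma1}. That lemma guarantees $\alpha \leq \epsilon(i,j,\mu)$ for every pair $i,j$, where $\alpha \overset{\textit{df}}{=} 1-\mathcal{K}(M)$ and $\epsilon(i,j,\mu) = \frac{1}{m_{ij}}\frac{\mu(c_i)}{\mu(c_j)}$. First I would clear denominators: since $m_{ij} > 0$ and, by the Perron--Frobenius theorem, every component of the principal eigenvector satisfies $\mu(c_j) > 0$, the inequality $\alpha \leq \frac{1}{m_{ij}}\frac{\mu(c_i)}{\mu(c_j)}$ is equivalent to the multiplicative bound $\alpha\, m_{ij}\,\mu(c_j) \leq \mu(c_i)$. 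This is precisely the left half of the eigenvector estimate already recorded as a corollary of Lemma \ref{lemma1}, so no new computation is needed to reach it.

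Next I would bring in the hypothesis. Because $\mathcal{K}(M) \in [0,1)$ we have $\alpha \in (0,1]$, so the assumption $m_{ij} > \frac{1}{1-\mathcal{K}(M)} = \frac{1}{\alpha}$ is equivalent to $\alpha\, m_{ij} > 1$. Multiplying this strict inequality by $\mu(c_j) > 0$ and chaining it with the lower bound from the previous step gives $\mu(c_i) \geq \alpha\, m_{ij}\,\mu(c_j) > \mu(c_j)$, which is exactly the assertion $\mu(c_i) > \mu(c_j)$.

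The argument is essentially a one-line manipulation once Lemma \ref{lemma1} is in hand, so I do not expect a serious obstacle. The only point deserving attention is the handling of strictness: the estimate from the lemma is non-strict, whereas the conclusion is strict. The strict inequality is recovered solely from the strict hypothesis $\alpha m_{ij} > 1$ together with the strict positivity $\mu(c_j) > 0$. I would therefore invoke $\mu(c_j) > 0$ explicitly (again via Perron--Frobenius), so that multiplying through by $\mu(c_j)$ preserves strictness and the degenerate case $\mu(c_j) = 0$ is ruled out rather than tacitly assumed away.
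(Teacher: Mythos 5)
Your proof is correct, and it reaches the conclusion by a more direct route than the paper. The paper's own proof invokes Theorem~2 of \cite{Kulakowski2013nodi}, a statement phrased in terms of the global discrepancy $\mathcal{D}(M,\mu)$ (if $\mathcal{D}(M,\mu)\leq\delta$ and $m_{ij}>\delta+1$ then $\mu(c_{i})>\mu(c_{j})$), and then substitutes the bound $\mathcal{D}(M,\mu)\leq\frac{1}{1-\mathcal{K}(M)}-1$ from (\ref{eq:theo_estim}) so that $\delta+1$ becomes $\frac{1}{1-\mathcal{K}(M)}$. You bypass the discrepancy machinery entirely and read the conclusion off the lower half of Lemma~\ref{lemma1}, $\alpha\leq\epsilon(i,j,\mu)=\frac{1}{m_{ij}}\frac{\mu(c_{i})}{\mu(c_{j})}$, i.e.\ off the bound $\alpha\, m_{ij}\,\mu(c_{j})\leq\mu(c_{i})$ already recorded in the paper as (\ref{eq:lemma_eq-1-1}). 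Since the discrepancy estimate (\ref{eq:theo_estim}) is itself derived from Lemma~\ref{lemma1}, the two arguments rest on the same underlying inequality; but your version is self-contained (no appeal to the external Theorem~2) and is cleaner on the logical bookkeeping --- the quoted implication (\ref{eq:cross_theo_2}) has an awkward nested form, whereas your chain $\mu(c_{i})\geq\alpha\, m_{ij}\,\mu(c_{j})>\mu(c_{j})$ makes explicit exactly where the strictness comes from (the strict hypothesis $\alpha m_{ij}>1$ together with $\mu(c_{j})>0$ via Perron--Frobenius, and $\alpha\in(0,1]$ because $\mathcal{K}(M)\in[0,1)$). What the paper's formulation buys in exchange is uniformity: the same template, bounding $\mathcal{D}$ by $\mathcal{K}$ and feeding it into an order-preservation result of \cite{Kulakowski2013nodi}, is reused verbatim for the second condition in Theorem~\ref{COP-theor-2}.
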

\begin{proof}
From \cite[Theorem 2]{Kulakowski2013nodi} it holds that 
\begin{equation}
\left\{ \left(\mathcal{D}(M,\mu)\leq\delta\right)\Rightarrow\left(m_{ij}>\delta+1\right)\right\} \Rightarrow\left\{ \left(m_{ij}>1\right)\Rightarrow\left(\mu(c_{i})>\mu(c_{j})\right)\right\} \label{eq:cross_theo_2}
\end{equation}

for some $\delta\in\mathbb{R}_{+}$. Due to the (\ref{eq:theo_estim})
the right side of (\ref{eq:cross_theo_2}) is 
\begin{equation}
\left(\mathcal{D}(M,\mu)\leq\frac{1}{1-\mathcal{K}(M)}-1\right)\Rightarrow\left(m_{ij}>\frac{1}{1-\mathcal{K}(M)}\right)\label{eq:cop_eq3}
\end{equation}

and since $\frac{1}{1-\mathcal{K}(M)}>1$ then also 
\begin{equation}
m_{ij}>\frac{1}{1-\mathcal{K}(M)}\,\,\,\,\,\text{implies}\,\,\,\,\,\mu(c_{i})>\mu(c_{j})\label{eq:cop_eq_4}
\end{equation}

which is the desired assertion.
\end{proof}
Similarly the second \emph{POIP} (\emph{the preservation of order
of intensity of preference condition}) leads to the new interesting
theorem. 
\begin{theorem}
\label{COP-theor-2}For the eigenvalue based pairwise comparisons
procedure $P_{\textit{ev}}$, the PC matrix $M=[m_{ij}]$, expressing
the quantitative relationships $R$ between concepts $c_{1},\ldots,c_{n}\in C$,
and the ranking $\mu_{P_{ev}(M)}=\mu$, holds that: 
\begin{equation}
\frac{m_{ij}}{m_{kl}}>\left(\frac{1}{1-\mathcal{K}(M)}\right)^{2}\,\,\,\,\,\mbox{implies}\,\,\,\,\,\frac{\mu(c_{i})}{\mu(c_{j})}>\frac{\mu(c_{k})}{\mu(c_{l})}\label{eq:cop-proof_2-1}
\end{equation}
\end{theorem}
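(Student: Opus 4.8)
The plan is to derive the POIP inequality directly from Lemma \ref{lemma1}, which controls each ratio $\mu(c_i)/\mu(c_j)$ in terms of the corresponding matrix entry $m_{ij}$ and $\alpha = 1-\mathcal{K}(M)$. First I would rewrite the two-sided bound $\alpha \le \epsilon(i,j,\mu) \le 1/\alpha$ of Lemma \ref{lemma1}, recalling that $\epsilon(i,j,\mu) = \frac{1}{m_{ij}}\frac{\mu(c_i)}{\mu(c_j)}$, in the multiplicative form
\[
\alpha\, m_{ij} \le \frac{\mu(c_i)}{\mu(c_j)} \le \frac{1}{\alpha}\, m_{ij},
\]
which is exactly the Corollary stated just after Lemma \ref{lemma1}. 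This bound holds for every ordered pair of indices, in particular for the pair $(i,j)$ and, independently, for the pair $(k,l)$.

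The key step is to combine the lower bound for the $(i,j)$ ratio with the upper bound for the $(k,l)$ ratio. Using $\frac{\mu(c_i)}{\mu(c_j)} \ge \alpha\, m_{ij}$ together with $\frac{\mu(c_k)}{\mu(c_l)} \le \frac{1}{\alpha}\, m_{kl}$, I would divide to obtain
\[
\frac{\mu(c_i)/\mu(c_j)}{\mu(c_k)/\mu(c_l)} \;\ge\; \frac{\alpha\, m_{ij}}{m_{kl}/\alpha} \;=\; \alpha^{2}\,\frac{m_{ij}}{m_{kl}}.
\]
Here I must keep track that all quantities involved are strictly positive (guaranteed by Perron--Frobenius and $m_{ij}\in\mathbb{R}_{+}$), so dividing the two inequalities preserves their direction.

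Finally, I would invoke the hypothesis $\frac{m_{ij}}{m_{kl}} > (1/(1-\mathcal{K}(M)))^{2} = 1/\alpha^{2}$. Multiplying through by $\alpha^{2} > 0$ gives $\alpha^{2}\,\frac{m_{ij}}{m_{kl}} > 1$, and chaining this with the displayed inequality yields $\frac{\mu(c_i)/\mu(c_j)}{\mu(c_k)/\mu(c_l)} > 1$, which is precisely $\frac{\mu(c_i)}{\mu(c_j)} > \frac{\mu(c_k)}{\mu(c_l)}$, the desired POIP conclusion.

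I do not anticipate a serious obstacle --- the claim is essentially a one-line consequence of Lemma \ref{lemma1} once the right bounds are paired --- but the subtle point to get right is the direction of the inequalities when dividing, i.e.\ choosing the lower bound for the numerator ratio and the upper bound for the denominator ratio so that the genuine worst case is captured; pairing them the other way would produce a useless estimate. As an alternative, consistent with the proof of Theorem \ref{COP-theor-1}, one could instead cite the analogous order-preservation result of \cite{Kulakowski2013nodi} with threshold $(\delta+1)^{2}$ and substitute $\delta = \mathcal{D}(M,\mu) \le \frac{1}{\alpha}-1$ from (\ref{eq:theo_estim}); this yields the same squared bound $1/\alpha^{2}$ and explains why the exponent $2$ appears --- one factor of $1/\alpha$ for each of the two independent ratios being compared.
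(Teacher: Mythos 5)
Your proof is correct, but it takes a genuinely different route from the paper's. The paper proves Theorem \ref{COP-theor-2} by citing Theorem 3 of \cite{Kulakowski2013nodi} --- the implication in (\ref{eq:cross_theo_3}) whose antecedent involves the global discrepancy $\mathcal{D}(M,\mu)$ --- and then substituting the bound $\mathcal{D}(M,\mu)\leq\frac{1}{1-\mathcal{K}(M)}-1$ from (\ref{eq:theo_estim}), so that $(\delta+1)^{2}$ becomes $\left(\frac{1}{1-\mathcal{K}(M)}\right)^{2}$. You instead argue directly from Lemma \ref{lemma1} (equivalently, from the corollary $\alpha m_{ij}\mu(c_{j})\leq\mu(c_{i})\leq\frac{1}{\alpha}m_{ij}\mu(c_{j})$): taking the lower bound for the $(i,j)$ ratio and the upper bound for the $(k,l)$ ratio and dividing positive quantities gives $\frac{\mu(c_{i})/\mu(c_{j})}{\mu(c_{k})/\mu(c_{l})}\geq\alpha^{2}\frac{m_{ij}}{m_{kl}}>1$ under the hypothesis, which is the claim. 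Your version is self-contained within the paper and avoids the somewhat awkward logical form of (\ref{eq:cross_theo_3}); it also buys slightly more than the citation route, since the cited theorem's conclusion is guarded by the extra condition $m_{ij}>m_{kl}>1$ (inherited from the \emph{POIP} formulation), whereas your argument needs no such restriction and thus covers the theorem exactly as stated. Your closing remark correctly identifies the paper's actual proof as the alternative, and your explanation of where the exponent $2$ comes from (one factor of $\nicefrac{1}{\alpha}$ per ratio) is the right intuition.
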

\begin{proof}
From \cite[Theorem 3]{Kulakowski2013nodi} holds that 
\begin{equation}
\left\{ \left(\mathcal{D}(M,\mu)\leq\delta\right)\Rightarrow\frac{m_{ij}}{m_{kl}}>\left(\delta+1\right)^{2}\right\} \Rightarrow\left\{ m_{ij}>m_{kl}>1\Rightarrow\frac{\mu(c_{i})}{\mu(c_{j})}>\frac{\mu(c_{k})}{\mu(c_{l})}\right\} \label{eq:cross_theo_3}
\end{equation}

for some $\delta\in\mathbb{R}_{+}$. Due to the (\ref{eq:theo_estim})
the right side of (\ref{eq:cross_theo_3}) is: 
\begin{equation}
\left(\mathcal{D}(M,\mu)\leq\frac{1}{1-\mathcal{K}(M)}-1\right)\Rightarrow\frac{m_{ij}}{m_{kl}}>\left(\frac{1}{1-\mathcal{K}(M)}-1+1\right)^{2}\label{eq:cross_theo_4}
\end{equation}

and since $\frac{1}{1-\mathcal{K}(M)}>1$ then also

\begin{equation}
\frac{m_{ij}}{m_{kl}}>\left(\frac{1}{1-\mathcal{K}(M)}\right)^{2}\Rightarrow\frac{\mu(c_{i})}{\mu(c_{j})}>\frac{\mu(c_{k})}{\mu(c_{l})}\label{eq:cross_theo_5}
\end{equation}

which is the desired assertion.
\end{proof}
The above two theorems show a direct relationship between the level
of inconsistency and the conditions of order preservations. They may
be used as a quick criterion for assessing whether in a certain case
a given condition will be met. The theorems show that the lower the
inconsistency the easier the left sides of (\ref{eq:cop-proof_2})
and (\ref{eq:cop-proof_2-1}) could be satisfied. Thus, in practice
the more consistent the \emph{PC} matrix the more often \emph{POP}
and \emph{POIP} conditions for randomly selected matrices are satisfied.

\section{Summary}

This paper presents a new perspective on the pairwise comparisons
method. Following the proposed approach the weight deriving method
is a heuristic procedure that transforms the input (a \emph{PC} matrix)
into the output (a weight vector). Hence, after determining the quality
of the input and the quality of the output it is possible to discuss
the quality of the weight deriving method. As the output quality indicator
global ranking discrepancy \cite{Kulakowski2013nodi} has been proposed.
Following the \emph{PC} method theory the input quality is determined
with the help of an inconsistency index. With the help of these two
indices two properties of a good weight deriving method have been
proposed. The first stipulates that when a \emph{PC} matrix is consistent
then there should be no discrepancy between expert judgments and the
ranking results. The second one refers to the reasonableness of reducing
inconsistency. It requires that the appropriately significant decrease
in inconsistency (if it is greater than $0$) always leads to a decrease
in discrepancy. In other words even if it is not possible to reduce
inconsistency to $0$, making a significant inconsistency reduction
must result in a discrepancy reduction. The properties were proven
for \emph{Koczkodaj's} inconsistency index $\mathcal{K}$ and the
\emph{Eigenvector} \emph{based} method $P_{\textit{ev}}$. The presented
reasoning results in four further claims revealing relationships between
$\mathcal{K}$ and the principal eigenvalue, \emph{Saaty's} inconsistency
index and two conditions of order preservation. 

The proposed properties are general, and thus relate to each pairwise
comparisons weight deriving method and each applicable inconsistency
index. Therefore, despite the fact that the presented approach allows
showing the relationship between $\mathcal{K}$ and $P_{\textit{ev}}$
from a new perspective, many questions, especially regarding the other
weight deriving methods and the inconsistency indices, remain. The
answers to these questions may bring new interesting results, which
allow us to enrich our understanding of the \emph{PC} method.

\section*{Acknowledgements }

I would like to thank Prof. Antoni Lig\k{e}za for his and insightful
comments, constant support, and reading the first version of this
work. Special thanks are due to Dan Swain for his editorial help.

\bibliographystyle{plain}
\bibliography{papers_biblio_reviewed}

\appendix

\section{Conditions of Order Preservations\label{sec:Conditions-of-Order}}

In \cite{BanaeCosta2008acao} \emph{Bana e} \emph{Costa} and \emph{Vansnick}
formulate two (COP) conditions of order preservations. The first,
\emph{the preservation of order preference condition} \emph{(POP}),
claims that the ranking result in relation to the given pair of concepts
$(c_{i},c_{j})$ should not break with the expert judgement, i.e.
for a pair of concepts $c_{1},c_{2}\in C$ such that $c_{1}$ dominates
$c_{2}$ i.e. $m_{1,2}>1$ it should hold that:

\begin{equation}
\mu(c_{1})>\mu(c_{2})\label{eq:6-cop-qualitative-cond-1}
\end{equation}

The second one \emph{the preservation of order of intensity of preference
condition }(\emph{POIP),} claims that if $c_{1}$ dominates $c_{2}$,
more than $c_{3}$ dominates $c_{4}$ (for $c_{1},\ldots,c_{4}\in C$),
i.e. if additionally $m_{3,4}>1$ and $m_{1,2}>m_{3,4}$ then also

\begin{equation}
\frac{\mu(c_{1})}{\mu(c_{2})}>\frac{\mu(c_{3})}{\mu(c_{4})}\label{eq:8-eq:cop-quantitative-cond}
\end{equation}

\end{document}